\documentclass[11pt]{article}
\usepackage{amsfonts}
\usepackage{amssymb}
\usepackage{amsmath}
\usepackage{theorem}
\usepackage{graphicx}

\setcounter{MaxMatrixCols}{10}

\newtheorem{theo}{Theorem}[section]
{\theorembodyfont{\rmfamily}
\newtheorem{definition}[theo]{Definition}
}

\newenvironment{proof}{\textit{Proof.}}{\hfill$\square$}

{\theorembodyfont{\rmfamily}

}
{\theorembodyfont{\rmfamily}

}
{\theorembodyfont{\rmfamily}

}

\begin{document}

\title{On egalitarian values for cooperative games with level structures\footnote{This work is part of the R+D+I project grants MTM2017-87197-C3-1-P, MTM2017-87197- C3-3-P, PID2021-124030NB-C31 and PID2021-124030NB-C32, funded by MCIN/AEI/10.13039/501100011033/ and by "ERDF A way of making Europe”/EU. This research was also funded by Grupos de Referencia Competitiva ED431G-2019/01, ED431C-2020/14 and ED431C-2021/24 from the Conseller\'ia de Cultura, Educaci\'on e Universidades, Xunta de Galicia. During the completion of this work, J.~C. Gon\c calves-Dosantos is a "Margarita Salas'' postdoctoral researcher, carrying out a research stay at the CIO, {\it Centro de Investigaci\'on Operativa} of the Miguel Hern\'andez University of Elche (Spain), to which he is grateful for the welcome.
	}}
\author{J.M. Alonso-Meijide$^1$, J. Costa$^2$, \\
I. Garc\'{\i}a-Jurado$^3$, J.C. Gon\c{c}alves-Dosantos$^4$}
\date{\empty}
\maketitle

\begin{abstract}
In this paper we extend the equal division and the equal surplus division 
values for transferable utility cooperative games  to the more general setup of  transferable utility cooperative games with level structures. In the case of the equal surplus division value we propose 
three possible extensions, one of which has already been described in the literature.
We provide axiomatic characterizations of the values considered, apply them to a particular cost sharing problem and compare them in the framework of such an application.
\end{abstract}

\footnotetext[1]{Universidade de Santiago de Compostela, Grupo MODESTYA, CITMAga and Departamento de Estat\'{\i}stica, An\'alise Matem\'atica e
Optimizaci\'on, Facultade de Ciencias, Campus de Lugo, 27002 Lugo, Spain.} 
\footnotetext[2]{Universidade da Coru\~{n}a, Grupo MODES, Departamento de Matem\'aticas, Campus de Elvi\~{n}a, 15071 A Coru\~{n}a, Spain.}
\footnotetext[3]{Universidade da Coru\~{n}a, Grupo MODES, CITMAga and Departamento de Matem\'aticas, Campus de Elvi\~{n}a, 15071 A Coru\~{n}a, Spain.}
\footnotetext[4]{Universidade da Coru\~{n}a, Grupo MODES, CITIC and Departamento de Matem\'aticas, Campus de Elvi\~{n}a, 15071 A Coru\~{n}a, Spain.}

\noindent \textbf{Keywords:} cooperative games, level structures, equal
division value, equal surplus division value.

\section{Introduction}

In many practical situations, when a group of agents are faced with sharing the costs of a project they are jointly developing, they use an egalitarian approach to cost sharing. Egalitarian sharing is a clear criterion, computationally simple and easily acceptable to all as a non-conflict generating procedure. Selten (1972)
indicates that egalitarian considerations explain in a successful way
observed outcomes in experimental cooperative games.

Cooperative game theory, which is concerned, among other things, with the study of fair distributions of the outcomes of cooperation, has analysed various distribution rules based on egalitarian criteria. For instance, van den Brink
(2007) provides a comparison of the equal division value and the Shapley
value, and Casajus and H\"{u}ttner (2014) compare those two solutions with
the equal surplus division value (studied first in Driessen and Funaki, 1991). 

More recently, egalitarian solutions have been studied in the context of coalition-structured cooperative games. More specifically, Alonso-Meijide et al. (2020) extend the equal division and the equal surplus division values for cooperative games with a priori unions; in the case of the equal surplus division value Alonso-Meijide et al. (2020) propose three possible extensions. Hu and Li (2021) extend the equal surplus division value for cooperative games with level structures. Level structures were introduced in Owen (1977) and further studied in Winter (1989). A level structure is a collection of nested partitions of the set of players that conditions their negotiation. When instead of a collection of nested partitions we have a single partition, the level structure is called an a priori union structure. The Hu and Li's equal surplus division value for cooperative games with level structures turns out to be an extension of one of the three equal surplus division values for cooperative games with a priori unions introduced in Alonso-Meijide et al. (2020).

In this paper we extend to the case with levels the other two equal surplus division values for cooperative games with a priori unions introduced in Alonso-Meijide et al. (2020), as well as the equal division value. Moreover we provide new insights and axiomatic characterizations of all the values considered here, including the Hu and Li value, and illustrate their interest with a motivating example.

The structure of the paper is as follows. In Section~\ref{Sec:exa} we elaborate a particular cost sharing problem to motivate the informal presentation, and discussion, of various egalitarian values in the context of cooperative games with level structures. In Section~\ref{Sec:fev} we describe the model of cooperative games with level structures and define four egalitarian values. In Section~\ref{Sec:axic} we provide axiomatic characterizations of the previous values. Finally, Section~\ref{Sec:conre} is devoted to the concluding remarks.

\section{An example}\label{Sec:exa}

Housing legislation in most democratic states includes regulations on how to share the costs of improving common elements in homeowners' associations. In some cases, such regulations recommend the use of equal sharing criteria. For example, in the Netherlands, each of the owners of the dwellings involved in an improvement of the common elements of a building must share equally in the debts and costs involved, unless the internal community agreements provide for a different proportion of participation.

In this section we deal with an example that arises in the sharing of the ordinary maintenance costs of a parking area serving two residential buildings. The specific characteristics of the example are as follows. We consider a small residential complex consisting of two buildings that share an underground parking area. The first building is a two-storey building with one dwelling on each floor. The second building is also two-storey with one dwelling on the first floor and two dwellings on the second floor. From the parking area there is a lift to the first building and another lift to the second building. The owners of the dwellings in the residential complex have entrusted the maintenance of the common parking area (where each owner has one parking place) to a company which is responsible for the cleaning and security of the parking area, as well as for the maintenance of the lifts.

The company in charge of maintenance has a standard monthly fee for each community of owners which has two parts: a fixed part of 50 euros, and a variable part consisting of 50 euros for each lift, 4 euros times the highest floor the lift should reach (for each lift) and 10 euros for each parking place. Accordingly, the community in this example should pay the following monthly amount in euros (broken down as the sum of the fee components):
$$50+50\cdot 2+4\cdot 2\cdot 2+10\cdot 5=216.$$
Now the question is how to distribute this fee in an equitable way among the five owners involved. Below are several proposals.
\begin{itemize}
\item 
One proposal is to divide the total fee equally between the five owners. Accordingly, each of them would have to pay 43.2 euros.
\item 
Another proposal is to take into account that not all owners would have the same individual fee; by individual fee we mean the fee that each owner would have to pay if only he had access to the parking area. According to this proposal each owner should pay his individual fee plus the remainder of the total fee divided equally among the owners; by the remainder of the total fee we mean the difference between the total fee and the sum of the individual fees (notice that this remainder may be a negative amount). Table \ref{table1} shows the individual fee of each owner. Therefore, each owner must pay his individual fee plus $\frac{1}{5}(216-582)$, resulting in the following distribution vector: $(40.8,44.8,40.8,44.8,44.8)$.
\end{itemize}

\vspace*{0.5cm}

\begin{table}[htbp]
	\begin{center}
		\begin{tabular}{|c|c|}
			\hline
			Owner & Individual Fee \\ \hline
			1 & $50+50\cdot 1+4\cdot 1\cdot 1+10\cdot 1=114$  \\ \hline
			2 & $50+50\cdot 1+4\cdot 2\cdot 1+10\cdot 1=118$ \\ \hline
			3 & $50+50\cdot 1+4\cdot 1\cdot 1+10\cdot 1=114$ \\ \hline
			4 & $50+50\cdot 1+4\cdot 2\cdot 1+10\cdot 1=118$ \\ \hline
			5 & $50+50\cdot 1+4\cdot 2\cdot 1+10\cdot 1=118$ \\ \hline
		\end{tabular}%
	\end{center}
	\caption{Owners' individual fees}
	\label{table1}
\end{table}
\noindent
The two proposals above do not take into account that, in setting the monthly fee, it would be appropriate to take into account that the owners participate in the residential complex according to a nested structure. Firstly, the owners are grouped according to the lifts that serve them; secondly, they are grouped according to the floor on which their property is located. The nested structure of the owners is shown in Figure \ref{figure1}. Next we describe four other proposals for distributing the total fee that do take into account the nested structure of the owners.

\begin{figure}[htbp]
	\begin{center}
		\includegraphics[scale=0.75]{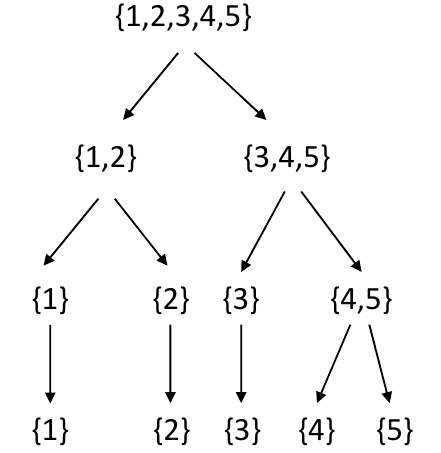}
	\end{center}
	\caption{The nested structure of the property owners}
	\label{figure1}
\end{figure}

\begin{itemize}
	\item 
	The first of these four proposals is an equal division at each level of the nested structure; i.e. first we divide the 216 euros equally between the two lifts, then we divide the amount allocated to each lift equally between the floors it serves; finally, we divide the amount allocated to each lift and floor equally between the corresponding owners. We say that the total fee is divided equally between the owners, but taking into account their nested structure. This proposal results in the following distribution vector: $(54,54,54,27,27)$.
\end{itemize}
\noindent
The following three proposals take into account the nested structure of the owners, but also the individual fees that the owners, floors and lifts would have to pay if being alone. The latter can be done in various ways, resulting in three different proposals.

\begin{itemize}
	\item 
	One proposal is to calculate the individual fee for each lift and share it equally among the owners it serves taking into account the nested structure. The remainder of the total fee is then divided equally among the owners taking again into account the nested structure. The individual fees for the lifts would be $128$ and $138$, respectively. The remainder of the total fee is $-50$. Then, for instance, the forth component of the corresponding distribution vector is 
	$$\frac{138}{2\cdot 2}-\frac{50}{2\cdot 2\cdot 2}=28.25$$
	and the complete distribution vector is $(51.5,51.5,56.5,28.25,28.25)$.
	\item 
	Another proposal is to calculate the individual fee for each owner, for each floor and for each lift and then allocate to each owner his individual fee, plus the remainder of the individual fee for his floor divided equally among the owners of his floor (taking into account the nested structure), plus the remainder of the individual fee for his lift divided equally among the owners of his lift (taking into account the nested structure), plus the remainder of the total fee divided equally among the owners (taking into account the nested structure). Table \ref{table2} shows the individual fees of floors and lifts.  Then, for instance, the forth component of the corresponding distribution vector is 
	$$118+\frac{128-2\cdot 118}{2}+\frac{138-114-128}{2\cdot 2}+\frac{216-128-138}{2\cdot 2\cdot 2}=31.75$$
	and the complete distribution vector is $(49.5,53.5,49.5,31.75,31.75)$.
	
	\vspace*{0.5cm}
	
	\begin{table}[htbp]
		\begin{center}
			\begin{tabular}{|c|c|}
				\hline
				& Individual Fee \\ \hline
				Floor 1, Lift 1 & $50+50\cdot 1+4\cdot 1\cdot 1+10\cdot 1=114$  \\ \hline
				Floor 2, Lift 1 & $50+50\cdot 1+4\cdot 2\cdot 1+10\cdot 1=118$ \\ \hline
				Floor 1, Lift 2 & $50+50\cdot 1+4\cdot 1\cdot 1+10\cdot 1=114$ \\ \hline
				Floor 2, Lift 2 & $50+50\cdot 1+4\cdot 2\cdot 1+10\cdot 2=128$ \\ \hline
				Lift 1 & $50+50\cdot 1+4\cdot 2\cdot 1+10\cdot 2=128$ \\ \hline
				Lift 2 & $50+50\cdot 1+4\cdot 2\cdot 1+10\cdot 3=138$ \\ \hline
			\end{tabular}%
		\end{center}
		\caption{Floor's and lifts' individual fees}
		\label{table2}
	\end{table}
\item 
The last proposal we put forward is to allocate to each owner his individual fee plus the remainder of the total fee divided equally among the owners taking into account the nested structure. Then, for instance, the forth component of the corresponding distribution vector is 
$$118+\frac{216-114-118-114-118-118}{2\cdot 2\cdot 2}=72.25$$
and the complete distribution vector is $(22.5,26.5,22.5,72.25,72.25)$.
\end{itemize}

Table \ref{table3} shows the six proposed fee allocations among the owners that we have calculated above. All of them are adaptations of the equal share criterion and all seem reasonable, at least theoretically. However, in practice, they are not all equally reasonable; for instance, in this example, the distribution of Proposal 6 seems difficult for the fourth and fifth owners to accept. 

 In order to decide in which cases it is more appropriate to use each of the six proposals it is necessary to analyse them mathematically. The first two proposals correspond to the equal division rule and the equal surplus division rule, which have been studied for example in van den Brink (2007) and in Casajus and H\"{u}ttner (2014), respectively. The other four will be analysed in the following sections of this article. Proposal 5 has been introduced and studied in Xu and Li (2021), but in this article we obtain a new axiomatic characterization for it.

\vspace*{0.5cm}

\begin{table}[htbp]
	\begin{center}
		\begin{tabular}{|c|c|c|c|c|c|}
			\hline
		 &1&2&3&4&5 \\ \hline
		Proposal 1 & 43.2 & 43.2 & 43.2 & 43.2 & 43.2 \\ \hline
		Proposal 2 & 40.8 & 44.8 & 40.8 & 44.8 & 44.8 \\ \hline
		Proposal 3 & 54 & 54 & 54 & 27 & 27 \\ \hline
		Proposal 4 & 51.5 & 51.5 & 56.5 & 28.25 & 28.25  \\ \hline
		Proposal 5 & 49.5& 53.5&49.5 & 31.75& 31.75 \\ \hline
		Proposal 6 &22.5 &26.5 & 22.5& 72.25& 72.25 \\ \hline
		\end{tabular}%
	\end{center}
	\caption{The six proposals}
	\label{table3}
\end{table}

\section{Four egalitarian values for cooperative games with level structures}\label{Sec:fev}

Transferable utility games (abbreviated, TU-games) are probably the most studied model of cooperative 
game theory. A TU-game is defined by a finite set of players $N=\{1,2,\ldots,n\}$ and a real-valued 
function  $v:2^N\rightarrow\mathbb{R}$, called a characteristic function, where $v(S)$ is the worth 
of $S \subseteq N$, that is, the benefits (or costs) that coalition $S$ is able to generate. 
By convention, $v(\emptyset)=0$. 
A TU-game with a priori unions is a triplet $(N, v, \mathcal{C})$, where 
$\mathcal{C}=\{C_1,C_2,\ldots ,C_m\}$ is a partition of the player set $N$ into unions.

A level structure  $\mathcal{L}$ is a sequence of nested partitions of the set of players $N$,  
$\mathcal{L}=\{\mathcal{C}_0,\mathcal{C}_1,\ldots,\mathcal{C}_{k+1}\}$. Level zero comprises each 
player $i\in N$ as a union, $\mathcal{C}_0 = \{\{i\} : i\in N \}$. At each level $l\geq 1$, 
the partition is obtained by aggregation of unions from level $l-1$; if $C\in \mathcal{C}_l$ 
then $C=\bigcup_{C'\in \hat{\mathcal{C}}_{l-1}}C'$, where 
$\hat{\mathcal{C}}_{l-1}\subseteq\mathcal{C}_{l-1}$. 
Finally, $\mathcal{C}_{k+1}$ is the trivial partition with the grand coalition $N$ as the 
sole union, $\mathcal{C}_{k+1}=\{ N \}$. A level game is a triplet $(N,v,\mathcal{L})$, 
where $(N,v)$ is a TU-game and $\mathcal{L}$ is a level structure. A TU-game with a priori 
unions can be viewed as a particular case of a level game with $k=1$.

For every player $i\in N$ and level $l\in\{0,1,\ldots,k+1\}$, we denote by $C_l(i)$ the union 
in $\mathcal{C}_l\in\mathcal{L}$ containing $i$, that is, $i \in C_l(i) \in \mathcal{C}_l$. 
Also, for every level $l\in\{1,2,\ldots,k+1\}$ and union $C\in\mathcal{C}_l\in\mathcal{L}$, 
we denote by $\lfloor C \rfloor$ the set of all unions in $\mathcal{C}_{l-1}$ 
contained in $C$. We can refer to these unions as the direct subordinates of $C$.
$$\lfloor C\rfloor=\{C'\in\mathcal{C}_{l-1}\in\mathcal{L} : C'\subseteq C\}.$$

By a value, we mean a map $g$ that assigns to every level game $(N,v,\mathcal{L})$ 
a vector $g (N,v,\mathcal{L})  \in \mathbb{R}^N$ with components $g_i (N,v,\mathcal{L})$ 
for all $i \in N$. Each component represents the player's payoff according to $g$. 
Alonso-Meijide et al. (2020)  define and study values for games with a priori unions 
using an egalitarian approach. Next, we propose four values that extend the previous 
ones to the context of level games. The third value has already been introduced in Hu and Li (2021).

The first proposal divides the worth of the grand coalition, $v(N)$, equally among 
the players at each level of the level structure, that is, for a player $i\in N$, 
the value divides $v(N) = v(C_{k+1}(i))$ by the cardinals of the direct subordinates 
of all unions containing $i$. 
We call this an equitable allocation among the players, but taking into account the 
level structure.

\begin{definition}
	The LED-value is defined for every $(N,v,\mathcal{L})$ and $i\in N$ by
	$$LED_{i}=\frac{v(C_{k+1}(i))}{\Pi_{l=1}^{k+1}|\lfloor C_l(i)\rfloor|}.$$
\end{definition}

The second proposal is to calculate the worth of each union at level $k$ 
(the last non-trivial partition) and share it equally among its players, 
but taking into account the level structure.
The remainder of the total worth is then divided equally among the players 
taking again into account the level structure.

\begin{definition}
	The LESD$^{1}$-value is defined for every $(N,v,\mathcal{L})$ and $i\in N$ by
	$$LESD^1_{i}=\frac{v(C_k(i))}{\Pi_{l=1}^{k}|\lfloor C_l(i)\rfloor|}+\frac{v(C_{k+1}(i))-\sum_{C\in \lfloor C_{k+1}(i)\rfloor}v(C)}{\Pi_{l=1}^{k+1}|\lfloor C_l(i)\rfloor|}.$$
\end{definition}

The third proposal is to allocate to each player its individual worth, 
plus the remainder of the worth of the union to which it belongs on level 1 
divided equally among the members of the union, but taking into account the 
level structure, and so on.

\begin{definition}
	The LESD$^2$-value is defined for every $(N,v,\mathcal{L})$ and $i\in N$ by
	$$LESD^2_{i}=v(i)+\sum_{l=1}^{k+1}\frac{v(C_{l}(i))-\sum_{C\in \lfloor C_{l}(i)\rfloor}v(C)}{\Pi_{l'=1}^{l}|\lfloor C_{l^{'}}(i)\rfloor|}.$$
\end{definition}

The fourth and last proposal is to allocate to each player its individual worth,
plus the remainder of the worth of the grand coalition divided equally among 
the players, but taking into account the level structure.

\begin{definition}
	The LESD$^3$-value is defined for every $(N,v,\mathcal{L})$ and $i\in N$ by
	$$LESD^3_{i}=v(i)+\frac{v(C_{k+1}(i))-\sum_{j\in N}v(j)}{\Pi_{l=1}^{k+1}|\lfloor C_l(i)\rfloor|}.$$
\end{definition}

To conclude this section, we introduce a few new concepts that we will need in the next section, in which we provide axiomatic characterizations of the four values.
Given a level game $(N,v,\mathcal{L})$ and level $l\in\{0,1,\ldots,k\}$, we define 
$v^l$, a new characteristic function on $\mathcal{C}_l$, as
$$v^l(S)=v\left(\bigcup_{C\in S}C\right)\text{ for all }S\subseteq\mathcal{C}_l.$$
The $l$-th quotient game $(\mathcal{C}_l,v^l)$ is  induced from $(N,v,\mathcal{L})$ 
by treating unions of $\mathcal{C}_l\in\mathcal{L}$ as players.

Given a level structure $\mathcal{L}=\{\mathcal{C}_0,\mathcal{C}_1,\ldots,\mathcal{C}_{k+1}\}$ 
and level $l\in\{0,1,\ldots,k\}$, we define the $l$-th truncation of 
$\mathcal{L}$, denoted by $\mathcal{L}^l$, as a new level structure in which the 
set of players is $\mathcal{C}_l\in\mathcal{L}$ and 
$\mathcal{L}^l=\{\mathcal{C}_0^l,\mathcal{C}_1^l,\ldots,\mathcal{C}_{k-l+1}^l\}$, with
\begin{itemize}
	\item $\mathcal{C}_0^l=\{ \{S\} : S\in \mathcal{C}_l \}$.
	\item $\mathcal{C}_j^l=\{\{T\in \mathcal{C}_0^l : T\subseteq Q\} : Q\in \mathcal{C}_{l+j}\}$, 
	$j=1,2,\ldots,k-l+1$.
\end{itemize}
We call $(\mathcal{C}_l, v^l, \mathcal{L}^l)$ the $l$-th truncated game.

Given a TU-game $(N,v)$ and players $i,j\in N$, we say that $i,j$ are indistinguishable in $(N,v)$ if $v(S\cup i)=v(S\cup j)$ for all $S\subseteq N\backslash\{i,j\}$. We say that $i$ is a nullifying player in $(N,v)$ if $v(S\cup i)=0$ for all $S\subseteq N$. We say that $i$ is a dummifying player in $(N,v)$ if $v(S\cup i)=\sum_{j \in S \cup i} v(j)$ for all $S\subseteq N$. The above definitions extend to level games in a natural way; thus, for example, we say that $i,j$ are indistinguishable in $(N,v,\mathcal{L})$ if they are indistinguishable in $(N,v)$. Finally, we denote the restriction of $(N,v)$ on $S\subset N$ as $(S,v)$.

\section{Axiomatic characterizations}\label{Sec:axic}

In this section we provide axiomatic characterizations of the four values defined in Section \ref{Sec:fev}. 

We show below a set of properties that characterizes the first proposal. The first two properties are standard in the literature. The property of symmetry among unions on each level says that two indistinguishable coalitions at a particular level, both included in the same coalition at the next level, receive the same. The nullifying player property states that a nullifying player receives zero. 
\bigskip

\noindent \textbf{Efficiency.} A value $g$ for level games satisfies efficiency if, for all $(N,v,\mathcal{L})$, it holds that
$$\sum_{i\in N}g_i(N,v,\mathcal{L})=v(N).$$

\noindent \textbf{Additivity.} A value $g$ for level games satisfies additivity if, for all  $(N,v,\mathcal{L})$ and $(N,w,\mathcal{L})$, it holds that
$$g(N,v,\mathcal{L})+g(N,w,\mathcal{L})=g(N,v+w,\mathcal{L}).$$

\noindent 
\textbf{Symmetry among unions on each level.} A value $g$ for level games satisfies symmetry among unions on each level if, for all $(N,v,\mathcal{L})$, with $\mathcal{L}=\{\mathcal{C}_0,\mathcal{C}_1,\ldots,\mathcal{C}_{k+1}\}$, $l\in\{0,1,\ldots,k\}$ and $ C, C'\in \mathcal{C}_l$, with $C\cup C'\subseteq C''\in \mathcal{C}_{l+1}$, indistinguishable in $(\mathcal{C}_l,v^l,\mathcal{L}^l)$, it holds that
$$\sum_{i\in C}g_i(N,v,\mathcal{L})=\sum_{j\in C'}g_j(N,v,\mathcal{L}).$$

\noindent \textbf{Nullifying player property.}  A value $g$ for level games satisfies the nullifying player property if, for all $(N,v,\mathcal{L})$ and all $i\in N$ nullifying player in $(N,v,\mathcal{L})$, it holds that
$$g_i(N,v,\mathcal{L})=0.$$

\begin{theo}
	\label{thed}
	The LED-value is the unique value for level games that satisfies efficiency, additivity, symmetry among unions on each level and nullifying player property.
\end{theo}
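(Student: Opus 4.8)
The plan is to prove existence and uniqueness separately, the latter by reducing to a basis of the space of games via additivity.

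\textbf{Existence.} Checking the four properties for $LED$ is routine. Additivity is immediate, since $LED_i$ is linear in $v$ (only $v(C_{k+1}(i))=v(N)$ enters). For the nullifying player property, if $i$ is nullifying then $v(N)=v((N\setminus i)\cup i)=0$, so $LED_i=0$. Efficiency and symmetry both follow from the observation that the total payoff $LED$ assigns to a union $C\in\mathcal{C}_l$ depends only on $v(N)$ and on the chain of unions above $C$ (it equals $v(N)$ divided by the product, over levels $l'>l$, of the number of direct subordinates of the level-$l'$ union that contains $C$): summing the weights $1/\prod_{l=1}^{k+1}|\lfloor C_l(i)\rfloor|$ over $i\in N$ telescopes to $1$, which gives efficiency, and two siblings have the same ancestors above their common parent, hence equal totals, which gives symmetry (in fact unconditionally).

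\textbf{Uniqueness.} Let $g$ satisfy the four axioms. For $\emptyset\ne T\subseteq N$ let $u_T$ be the game with $u_T(S)=1$ if $S=T$ and $0$ otherwise; then $v=\sum_{\emptyset\ne T\subseteq N}v(T)\,u_T$, and additivity gives $g(N,v,\mathcal{L})=\sum_T g(N,v(T)u_T,\mathcal{L})$. Hence it suffices to compute $g$ on each game $(N,\lambda u_T,\mathcal{L})$ using only efficiency, symmetry and the nullifying player property. Two facts drive this. First, every player $i\notin T$ is nullifying in $\lambda u_T$, so $g_i=0$. Second, for every level $l$ the quotient game $(\lambda u_T)^l$ is again a scalar Dirac game on $\mathcal{C}_l$ (equal to $\lambda u_{\{C\in\mathcal{C}_l:C\subseteq T\}}$ when $T$ is a union of level-$l$ blocks, and the zero game otherwise), from which one checks the key indistinguishability lemma: any two distinct children $C,C'$ of a common $C''\in\mathcal{C}_{l+1}$ that both meet $T$ are indistinguishable in $(\mathcal{C}_l,(\lambda u_T)^l,\mathcal{L}^l)$. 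Indeed, for any admissible $S\subseteq\mathcal{C}_l\setminus\{C,C'\}$ the set $\bigcup_{D\in S\cup\{C\}}D$ omits the nonempty set $C'\cap T$, so it cannot equal $T$; thus the quotient worth vanishes at $S\cup\{C\}$, and symmetrically at $S\cup\{C'\}$, so $C$ and $C'$ are indistinguishable.

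Given the lemma, I compute $g(\lambda u_T)$ by downward induction on $l$, showing that every union at every level receives a total payoff determined by $\lambda u_T(N)$. At the top, $N$ receives $\sum_i g_i=\lambda u_T(N)$ by efficiency. For the step at a union $C''\in\mathcal{C}_{l+1}$: its children disjoint from $T$ consist of nullifying players and carry total $0$; its children that meet $T$ are pairwise indistinguishable by the lemma, hence receive equal totals by symmetry; since these totals sum to the (already known) total of $C''$, each is determined. When $T\subsetneq N$ this forces every block total, and ultimately every $g_i$, to be $0$, matching $LED_i=v(N)/\prod_{l=1}^{k+1}|\lfloor C_l(i)\rfloor|=0$; when $T=N$ every child meets $T$, so each parent's total is split equally among its children, and iterating the split down to the singletons yields $g_i=\lambda/\prod_{l=1}^{k+1}|\lfloor C_l(i)\rfloor|$. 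Summing over $T$ via additivity leaves only the $T=N$ term, giving $g_i(N,v,\mathcal{L})=v(N)/\prod_{l=1}^{k+1}|\lfloor C_l(i)\rfloor|=LED_i$.

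\textbf{Main obstacle.} The routine-looking part is in fact the crux: verifying that each quotient $(\lambda u_T)^l$ is itself a Dirac game and establishing the sibling-indistinguishability lemma in the truncated games $(\mathcal{C}_l,(\lambda u_T)^l,\mathcal{L}^l)$, since this is the only place where the nested level structure interacts with the combinatorics of $T$. Once the lemma is in hand, both the vanishing for $T\subsetneq N$ and the recursive equal split for $T=N$ follow from the same downward induction driven by efficiency and symmetry.
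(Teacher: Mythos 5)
Your proof is correct and takes essentially the same route as the paper's: decompose $v$ via additivity into scalar Dirac games $e_T^{\alpha}$, use efficiency plus symmetry among unions on each level to split the worth equally down the levels when $T=N$, and use the nullifying player property together with efficiency and symmetry to force all payoffs to zero when $T\subsetneq N$. Your sibling-indistinguishability lemma and the downward induction on levels simply make explicit the steps the paper leaves as ``it is not difficult to check.''
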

\begin{proof}
	It is immediate to check that the LED-value satisfies efficiency, additivity, symmetry among unions on each level and nullifying player property. To prove the unicity, consider a value $g$ for level games satisfying efficiency, additivity,  symmetry among unions on each level and nullifying player property. Fix $N$ and define, for all $\alpha \in \mathbb{R}$ and all non-empty $T\subseteq N$, the TU-game $(N,e_{T}^{\alpha
	}) $ given by $e_{T}^{\alpha }(S)=\alpha $ if $S=T$ and $e_{T}^{\alpha
	}(S)=0 $ if $S\neq T$. If $T=N$, since $g$ satisfies efficiency and symmetry among unions on each level, it is 	clear that 
	$$g_{i}(N,e_{N}^{\alpha },\mathcal{L})=\frac{\alpha }{\Pi_{l=1}^{k+1}|\lfloor C_l(i)\rfloor|}$$ 
	for any  $i\in N$ and $\mathcal{L}=\{\mathcal{C}_0,\mathcal{C}_1,\ldots,\mathcal{C}_{k+1}\}$, because all unions in $\mathcal{C}_l$ are indistinguishable in $(\mathcal{C}_l,(e_{N}^{\alpha})^l,\mathcal{L}^l)$, with  $l\in\{0,1,\ldots,k\}$. If $T\subset N$, notice
	that all players in $N\setminus T$ are nullifying players in $(N,e_{T}^{\alpha
	}) $ and then, since $g$ satisfies efficiency and nullifying player property, 
	\begin{equation*}
		\sum_{i\in T}g_{i}(N,e_{T}^{\alpha },\mathcal{L})=\sum_{i\in N}g_{i}(N,e_{T}^{\alpha
		},\mathcal{L})=e_{T}^{\alpha }(N)=0.
	\end{equation*}%
	Then, since $g$ satisfies symmetry among unions on each level it is not difficult to check that 
	$g_i(N,e_{T}^{\alpha },\mathcal{L})=0$ for all $i\in N$. Finally, the additivity of $g$ and the fact that $v=\sum_{T\subseteq N}e_{T}^{v(T)}$ imply that 
	\begin{equation*}
		g_{i}(N,v,\mathcal{L})=\sum_{T\subseteq
			N}g_{i}(N,e_{T}^{v(T)},\mathcal{L})=g_{i}(N,e_{N}^{v(N)},\mathcal{L})=\frac{v(N)}{\Pi_{l=1}^{k+1}|\lfloor C_l(i)\rfloor|}
	\end{equation*}
	and thus
	\begin{equation*}
		g(N,v,\mathcal{L})=LED(N,v,\mathcal{L}).
	\end{equation*}
\end{proof}\bigskip 

The property of dummifying level/nullifying player property says that if a player of a dummifying union in the last non trivial partition is a nullifying player in the game restricted to this union, receives zero. This property replaces the nullifying player property in the characterization of the first proposal.

\bigskip 

\noindent 
\textbf{Dummifying level/nullifying player property.} A value $g$ for level games satisfies the
dummifying level/nullifying player property if, for all $(N,v,\mathcal{L})$, with $\mathcal{L}=\{\mathcal{C}_0,\mathcal{C}_1,\ldots,\mathcal{C}_{k+1}\}$, and all $i\in N$ nullifying player in $(C_k(i),v)$ such that $C_k(i)$ is a dummifying player in $(\mathcal{C}_{k},v^{k})$, then it holds that $$g_{i}(N,v,\mathcal{L})=0.$$

\begin{theo}
	The LESD$^1$-value is the unique value  for level games that satisfies efficiency, additivity, symmetry among unions on each level and dummifying level/nullifying player property.
\end{theo}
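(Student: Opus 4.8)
The easy half is to verify the four axioms. Additivity is clear since $LESD^1$ is linear in $v$, and efficiency follows by summing the two terms: the first distributes $v(C_k(i))$ inside each level-$k$ union and contributes $\sum_{C\in\mathcal{C}_k}v(C)$, while the surplus term contributes $v(N)-\sum_{C\in\mathcal{C}_k}v(C)$, adding up to $v(N)$. Symmetry among unions on each level holds because, summing $LESD^1_i$ over a level-$l$ block $C$, the combinatorial weights telescope to a quantity depending only on $v(C_k(C))$, on $v(N)$ and on the branchings above $C$; two indistinguishable siblings share the same parent and hence the same data, so their block-sums coincide. The only axiom needing a remark is the new one: if $i$ is nullifying in $(C_k(i),v)$ then $v(C_k(i))=0$, so the first term dies; and if moreover $C_k(i)$ is dummifying in $(\mathcal{C}_k,v^k)$, then plugging the full coalition into the dummifying identity gives $v(N)=v^k(\mathcal{C}_k)=\sum_{C\in\mathcal{C}_k}v(C)$, so the surplus is zero and the second term dies too; hence $LESD^1_i=0$.

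For uniqueness, let $g$ satisfy the four axioms. As in the proof of Theorem~\ref{thed} I would use additivity to reduce to a basis, but now the convenient basis is that of the unanimity games $u_T$, where $u_T(S)=1$ if $T\subseteq S$ and $0$ otherwise; the point is that for these games the quotient $u_T^k$ is a quotient unanimity game, and it is additive precisely when $T$ sits inside a single level-$k$ union. Two cases are immediate. For $T=N$ one has $u_N=e_N^{1}$, and efficiency together with symmetry force the level-by-level equal division exactly as in Theorem~\ref{thed}. For $T=C\in\mathcal{C}_k$ the quotient $u_C^k$ is additive, so every union is dummifying; each player outside $C$ is then nullifying in its own union and the new axiom gives it $0$, whereas inside $C$ all sibling sub-unions are indistinguishable in the truncated games, so symmetry and efficiency divide $v(C)=1$ equally down the levels, which is $LESD^1(u_C)$.

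The crux is the remaining $T$, for which the compound axiom cannot be applied directly. The device is to subtract unanimity games of unions so as to annihilate the quotient. Writing $\mathcal{T}=\{C\in\mathcal{C}_k:C\cap T\neq\emptyset\}$ and $T^{*}=\bigcup\mathcal{T}$, one checks $u_T^k=u_{T^{*}}^k$, so $(u_T-u_{T^{*}})^k\equiv0$ and every union is dummifying in $u_T-u_{T^{*}}$. If $|\mathcal{T}|\ge2$ then neither $T$ nor $T^{*}$ lies inside a single union, so every player is nullifying in its union and the axiom forces $g(u_T-u_{T^{*}})=0$, i.e. $g(u_T)=g(u_{T^{*}})$; this reduces the cross-union case to $T^{*}$, a union of level-$k$ blocks. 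The latter I would settle by downward induction on $|\mathcal{T}|$: the base $|\mathcal{T}|=|\mathcal{C}_k|$ is $T^{*}=N$, and for $2\le|\mathcal{T}|<|\mathcal{C}_k|$ I pick a block $C_1$ with $C_1\cap T^{*}=\emptyset$ and consider $u_{T^{*}}-u_{T^{*}\cup C_1}$; a short computation shows $C_1$ is dummifying in this game while its members are nullifying in $C_1$, so the axiom gives $g_i(u_{T^{*}})=g_i(u_{T^{*}\cup C_1})$ for $i\in C_1$. By the induction hypothesis the right-hand side is the full level equal division, and then symmetry among the indistinguishable blocks of $\mathcal{T}$ together with efficiency propagate this to all players, matching $LESD^1(u_{T^{*}})$.

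There remains the within-union case $T\subsetneq C^{*}\in\mathcal{C}_k$, where trivializing the quotient only zeroes the coordinates outside $C^{*}$ because the members of $T$ stay non-nullifying. Here I would argue by induction on the number of levels: once every coordinate outside $C^{*}$ has been removed the problem lives on the truncated level game inside $C^{*}$, which has one level fewer, and on games with a trivial quotient the compound axiom degenerates into the plain nullifying player property inside $C^{*}$; so Theorem~\ref{thed}, applied to the truncated game, identifies the within-union allocation as the level equal division, which is exactly $LESD^1(u_T)=LESD^1(u_{C^{*}})$ restricted to $C^{*}$. The step I expect to be the real obstacle is precisely this reduction: one must verify that the restriction of $g$ to a union isolated by a trivial quotient is a genuine value on the smaller-depth level game and still obeys the four axioms, so that the induction hypothesis (or Theorem~\ref{thed}) can legitimately be invoked. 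Granting that, additivity reassembles the value from the $g(u_T)=LESD^1(u_T)$ and yields $g=LESD^1$, closing the proof.
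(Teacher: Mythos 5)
Your verification half is sound, and your uniqueness argument follows a genuinely different route from the paper's: you work in the unanimity basis $\{u_T\}$ and exploit the compound axiom through subtraction tricks ($u_T-u_{T^*}$, $u_{T^*}-u_{T^*\cup C_1}$), whereas the paper first splits $v=v^*+v^{**}$ (peeling off the level-$k$ block worths) and then decomposes $v^{**}$ in the basis of games $e_T^\alpha$ that vanish everywhere except at $T$. Your cases $T=N$, $T\in\mathcal{C}_k$, and the cross-union case (the identity $g(u_T)=g(u_{T^*})$ plus the downward induction on $|\mathcal{T}|$) are all correct. The genuine gap is the one you flag yourself: the within-union case $T\subsetneq C^*\in\mathcal{C}_k$. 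There the compound axiom only zeroes the players outside $C^*$, because a player $i\in C^*\setminus T$ is \emph{not} nullifying in $(C^*,u_T)$ (indeed $u_T(S\cup i)=1$ whenever $T\subseteq S$), and symmetry cannot finish the job: inside $C^*$ the sub-unions meeting $T$ and those disjoint from $T$ (both classes are non-empty at some level, since $\emptyset\neq T\subsetneq C^*$) are \emph{distinguishable} in the relevant quotient games, so symmetry plus efficiency give only one linear equation for two unknown block sums. Your proposed repair --- applying Theorem~\ref{thed}, or an induction on the number of levels, to ``the restriction of $g$ to $C^*$'' --- is not licensed by anything you prove: none of the four axioms relates games on different player sets, so the restriction of $g$ is not, as it stands, a value on smaller level games to which Theorem~\ref{thed} or an induction hypothesis could be applied. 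This is precisely where the paper's choice of basis pays off: for the games $e_T^\alpha$, \emph{every} player outside $T$, including those of $C^*\setminus T$, is nullifying in its own union, so the compound axiom zeroes all of $N\setminus T$ and symmetry then forces zero on $T$.

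The gap is fillable roughly along the lines you gesture at, but the verification you postpone is the actual mathematical content, and it needs an auxiliary construction rather than a literal restriction. Fix $C^*$ and, for each game $w$ on $C^*$, consider the extension $\tilde w(S)=w(S\cap C^*)$ on $N$, and define $h_i(w)=g_i(N,\tilde w,\mathcal{L})$ for $i\in C^*$. For such extension games the level-$k$ quotient $\tilde w^k$ is additive, hence every union is dummifying, and every $i\notin C^*$ is nullifying in $(C_k(i),\tilde w)$; the compound axiom therefore zeroes $N\setminus C^*$, and from this one checks that $h$ inherits efficiency, additivity, symmetry among unions on each level and, crucially, the \emph{plain} nullifying player property on the truncated structure. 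Since the uniqueness argument of Theorem~\ref{thed} works for a fixed player set and level structure, it yields $h=LED$ there, and since $u_T=\tilde u_T$ when $T\subseteq C^*$, this gives $g_i(N,u_T,\mathcal{L})=1/\prod_{l=1}^{k}|\lfloor C_l(i)\rfloor|=LESD^1_i(N,u_T,\mathcal{L})$ for $i\in C^*$ --- no induction on the number of levels is needed. Finally, a smaller but real flaw in the reassembly step: additivity does not imply homogeneity, so knowing $g(u_T)$ does not determine $g(\Delta_T u_T)$ for the (arbitrary real) Harsanyi dividends $\Delta_T$; you must run all your arguments for $\alpha u_T$ with $\alpha\in\mathbb{R}$ (they do go through verbatim), which the paper builds in from the outset by working directly with the games $e_T^{v^{**}(T)}$.
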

\begin{proof} It is immediate to check that the LESD$^1$-value satisfies efficiency, additivity, symmetry among unions on each level and dummifying level/nullifying player property. To prove the unicity, consider a value $g$ for level games satisfying efficiency, additivity, symmetry among unions on each level and dummifying level/nullifying player property. Take $(N,v,\mathcal{L})$ with $\mathcal{L}=\{\mathcal{C}_0,\mathcal{C}_1,\ldots,\mathcal{C}_{k+1}\}$ and $\mathcal{C}_{k}=\{C_1,C_2,\ldots,C_m\}$. Now, define the TU-game $(N,v^{*})$ given by
	
	\begin{equation*}
		v^*(S) = \sum_{C\in \mathcal{C}_{k}:C\subseteq S}v(C) = \sum_{j=1}^{m}v^{C_{j}}(S)
	\end{equation*}
	for all $S\subseteq N$, where $v^{C_j}(S) = v(C_j)$ if $C_{j}\subseteq S$ 
	and $v^{C_j}(S) = 0$ otherwise. 
	Take $C_r\in \mathcal{C}_{k}$. Since $g$ satisfies efficiency, then
	\begin{equation*}
		\sum_{i\in N}g_i(N,v^{C_r},\mathcal{L})=v^{C_r}(N)=v(C_r).
	\end{equation*}
	All unions $C_j\in \mathcal{C}_{k}$ are dummifying players in $(\mathcal{C}_{k},(v^{C_r})^k)$ and all players $i\in C_j$, with $j\neq r$, are nullifying players in $(C_k(i),v^{C_r})$. By dummifying level/nullifying player property, $g_i(N,v^{C_r},\mathcal{L})=0$ for all $i\notin C_r$. 
	For each level $l\in\{0,1,\ldots,k-1\}$, unions $C, C'\in \mathcal{C}_l$, with $C\cup C'\subseteq C''\in \mathcal{C}_{l+1}$ and $C\cup C'\subseteq C_r$, are indistinguishable in $(\mathcal{C}_l,(v^{C_r})^l,\mathcal{L}^l)$, then symmetry among unions on each level implies that, for all $i\in C_r$, 
	$$g_i(N,v^{C_r},\mathcal{L}) = \frac{v(C_r)}{\Pi_{l=1}^{k}|\lfloor C_l(i)\rfloor|}=\frac{v(C_k(i))}{\Pi_{l=1}^{k}|\lfloor C_l(i)\rfloor|}.$$ 
	Using the additivity of $g$, for all $i\in N$, 
	\begin{equation}  \label{eq200}
		g_i(N,v^*,\mathcal{L})=\frac{v(C_k(i))}{\Pi_{l=1}^{k}|\lfloor C_l(i)\rfloor|}.
	\end{equation}
	Define now the characteristic functions $v^{**} = v - v^*$ and, for all $\alpha \in \mathbb{R}$ and all non-empty $T\subseteq N$, 
	$e_{T}^{\alpha}$ given by 
	$e_{T}^{\alpha }(S)=\alpha$ if $S=T$ and $e_{T}^{\alpha}(S)=0$ if $S\neq T$. 
	It is clear that $v^{**}=\sum_{T\subseteq N}e^{v^{**}(T)}_T$. If $T=N$,
	since $g$ satisfies efficiency and symmetry among unions on each level, it is
	clear that, for all $i\in N$, $g_i(N,e^{v^{**}(N)}_N,\mathcal{L})$ is given by
	\begin{equation*}
		\frac{{v^{**}(N)}}{\Pi_{l=1}^{k+1}|\lfloor C_l(i)\rfloor|} 
		= \frac{v(N) - \sum_{C\in  \mathcal{C}_{k}}v(C)}{\Pi_{l=1}^{k+1}|\lfloor C_l(i)\rfloor|}=\frac{v(C_{k+1}(i))-\sum_{C\in \lfloor C_{k+1}(i)\rfloor}v(C)}{\Pi_{l=1}^{k+1}|\lfloor C_l(i)\rfloor|}. 
	\end{equation*}
	If $T\subset N$, consider two cases:
	
	\begin{itemize}
		\item Take $T=\cup_{C\in R}C$, with $\emptyset\subset R\subset \mathcal{C}_{k}$. 
		For all $C_j\in \mathcal{C}_{k}$, if $T\neq C_j$ then $e_T^{v^{**}(T)}(C_j)=0$ and 
		if $T=C_j$ then $e_T^{v^{**}(T)}(C_j)=v^{**}(C_j)=0$. Hence, it is easy to 
		see that all unions in $\mathcal{C}_{k}\setminus R$ are dummifying players in 
		$(\mathcal{C}_{k},(e_T^{v^{**}(T)})^k)$. Also, since all players $i\in N\setminus T$ are 
		nullifying players in $(C_k(i),e^{v^{**}(T)}_T)$, dummifying level/nullifying player property implies that $g_i(N,e^{v^{**}(T)}_T,\mathcal{L})=0$ for all $i\in N\setminus T$. Notice that since all unions in $R$ are indistinguishable in  $(\mathcal{C}_{k},(e_T^{v^{**}(T)})^k,\mathcal{L}^k)$, then by symmetry among unions on each level, 
		$\sum_{i\in C}g_i(N,e^{v^{**}(T)}_T,\mathcal{L})
		=\sum_{i\in C'}g_i(N,e^{v^{**}(T)}_T,\mathcal{L})$ for all $C,C'\in R$; 
		notice also that, since
		\begin{equation*}
			\sum_{i\in T} g_i(N,e^{v^{**}(T)}_T,\mathcal{L}) = \sum_{i\in N} g_i(N,e^{v^{**}(T)}_T,\mathcal{L}) 
			=  e^{v^{**}(T)}_T(N) = 0,
		\end{equation*}
		then $\sum_{i\in C}g_i(N,e^{v^{**}(T)}_T,\mathcal{L})=0$ for all $C\in R$. To conclude, symmetry among unions on each level implies that $g_i(N,e^{v^{**}(T)}_T,\mathcal{L})=0$ for all 
		$i\in C\in R$, and therefore for all $i\in T$.
		
		\item For any other $T\subset N$ that is not in the previous case, 
		the  game $(\mathcal{C}_{k},(e_T^{v^{**}(T)})^k)$ satisfies that 
		$(e_T^{v^{**}(T)})^k(R)=0$ for all $R\subseteq \mathcal{C}_{k}$ and, thus, all unions in 
		$\mathcal{C}_{k}$ are indistinguishable and dummifying players in $(\mathcal{C}_{k},(e_T^{v^{**}(T)})^k)$. 
		If $i\in N\setminus T$, then $i$ is a nullifying player in $(C_k(i),e^{v^{**}(T)}_T)$ 
		and dummifying level/nullifying player property implies that $g_i(N,e^{v^{**}(T)}_T,\mathcal{L})=0$. 
		Analogously as in the previous case, symmetry among unions on each level implies that 
		$g_i(N,e^{v^{**}(T)}_T,\mathcal{L})=0$ for all $i\in T$.
	\end{itemize}
	Now additivity implies that, for all $i\in N$,  
	\begin{equation}  \label{eq201}
		g_i(N,v^{**},\mathcal{L})=\sum_{T\subseteq N}g_i(N,e_T^{v^{**}(T)},\mathcal{L})=\frac{{v^{**}(N)}}{\Pi_{l=1}^{k+1}|\lfloor C_l(i)\rfloor|}.
	\end{equation}
	Finally, from (\ref{eq200}), (\ref{eq201}), additivity and $v = v^{*} + v^{**}$ it is clear that 
	\begin{equation*}
		g(N,v,\mathcal{L})=LESD^1(N,v,\mathcal{L}).
	\end{equation*}\end{proof}
\bigskip

To characterize the third proposal, we use the same set of properties included in the previous theorems to characterize the first and second proposal, with a unique modification. The property of dummifying unions for a player replaces the nullifying player property of the first proposal or the dummifying level/nullifying player property of the second. The dummifying unions for a player property states that if the union containing a particular player is a dummifying union at every level, then this player receives his/her individual worth.

\bigskip \noindent \textbf{Dummifying unions for a player property.} A value 
$g$ for level games satisfies the dummifying unions for a player property if, for all $(N,v,\mathcal{L})$ with $\mathcal{L}=\{\mathcal{C}_0,\mathcal{C}_1,\ldots,\mathcal{C}_{k+1}\}$, and all $i\in N$ such that  $C_l(i)\in \mathcal{C}_l$ is a dummifying player in $(\mathcal{C}_{l},v^{l})$ for all $l\in\{0,1,\ldots,k\}$, then it holds that $$g_{i}(N,v,\mathcal{L})=v(i).$$

\begin{theo}
	The LESD$^2$-value is the unique value for level games that satisfies efficiency, additivity, symmetry among unions on each level and dummifying unions for a player property.
\end{theo}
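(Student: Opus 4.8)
The plan is to follow the template of the two preceding theorems. First I would dispatch the easy direction: LESD$^2$ satisfies efficiency, additivity and symmetry among unions on each level by routine computation (efficiency rests on the telescoping identity $v(C_{k+1}(i))=\sum_{j\in N}v(j)+\sum_{l=1}^{k+1}\sum_{D\in\mathcal{C}_l}\bigl(v(D)-\sum_{C\in\lfloor D\rfloor}v(C)\bigr)$ and on the fact that the equal level-division of each remainder inside a union $D$ sums back to that remainder). The only non-routine check is the dummifying unions for a player property: each remainder attached to $i$ equals $v^{l-1}(\lfloor C_l(i)\rfloor)-\sum_{C\in\lfloor C_l(i)\rfloor}v^{l-1}(\{C\})$, which vanishes precisely because the direct subordinate $C_{l-1}(i)$ is a dummifying player in $(\mathcal{C}_{l-1},v^{l-1})$; so if all unions containing $i$ are dummifying then every remainder is zero and $\mathrm{LESD}^2_i=v(i)$.

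For uniqueness, let $g$ satisfy the four axioms. As before I would use additivity to reduce to a spanning family, splitting the game into an additive part and a residual. The additive part is where the new axiom does its cleanest work: for the additive game $\bar v(S)=\sum_{j\in S}v(j)$ every quotient $(\bar v)^l$ is itself additive, so \emph{every} union is a dummifying player at \emph{every} level, and the dummifying unions for a player property yields $g_i(N,\bar v,\mathcal{L})=\bar v(i)=v(i)$. Writing $v=\bar v+w$ with $w=v-\bar v$, so that $w(\{j\})=0$ for every player, it then suffices to determine $g$ on $w$, which I would expand in the standard basis as $w=\sum_{|T|\ge 2}e^{w(T)}_T$ and evaluate on each $e^\alpha_T$.

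Analysing, for a fixed $e^\alpha_T$, which players $i$ have all their unions $C_l(i)$, $l\le k$, dummifying in $(\mathcal{C}_l,(e^\alpha_T)^l)$ gives two manageable cases. If $T$ is not a union, the condition holds exactly for $i\notin T$, so $g_i=0$ there; then efficiency ($\sum_{i\in T}g_i=e^\alpha_T(N)=0$) together with symmetry applied at the levels where $(e^\alpha_T)^l\equiv 0$ (where all unions are indistinguishable) forces $g_i=0$ for $i\in T$ as well, exactly as in the residual step of the previous theorem. If $T=N$, no player is dummifying, but all unions are indistinguishable at every level, so symmetry and efficiency give $g_i(N,e^\alpha_N,\mathcal{L})=\alpha/\prod_{l=1}^{k+1}|\lfloor C_l(i)\rfloor|$. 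The hard part will be the remaining case, a block coalition $C$ sitting at an intermediate level $1\le l\le k$: here $(e^\alpha_C)^l$ is a Dirac game on the single union $C$, so no player has all its unions dummifying, and since $C$ is \emph{not} indistinguishable from its siblings, symmetry and efficiency leave one degree of freedom undetermined (for instance $(t,t,-2t)$ in a three-player check). I expect to close this by feeding back the additive-game identities $g_i(N,\bar v,\mathcal{L})=v(i)$: expanded in the standard basis these read $\sum_{T\ni j}g_i(N,e^1_T,\mathcal{L})=\delta_{ij}$ for all $i,j$, and combined with the values already fixed (zero on non-unions, equal division on $N$) and the symmetry relations available at the vanishing levels of each $(e^1_C)^{l'}$, this linear system pins down $g$ on every block coalition; a direct check identifies the solution with the coefficients of $\mathrm{LESD}^2$. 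Reassembling by additivity then gives $g=\mathrm{LESD}^2$.
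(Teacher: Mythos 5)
Your plan follows the paper's template only up to the point you yourself flag as ``the hard part''; there the two proofs genuinely part ways, and the comparison is instructive. The paper never confronts the Dirac game $e^{\alpha}_{C_r}$ of a block at an intermediate level: instead it splits off, for each $l\in\{1,\dots,k\}$ and each $C_r\in\mathcal{C}_l$, the scaled unanimity game $v^{C_r}_l$ (worth $\alpha_r=v(C_r)-\sum_{C'\in\lfloor C_r\rfloor}v(C')$ to every coalition containing $C_r$), and disposes of it in one stroke by asserting that for every $l'\le l$ each union $C\in\mathcal{C}_{l'}$ with $C\nsubseteq C_r$ is a dummifying player in $(\mathcal{C}_{l'},(v^{C_r}_l)^{l'})$, so that the new axiom forces $g_i=0$ off $C_r$. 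That assertion is in fact false whenever $C_r$ splits below level $l$: for $l'<l$ the quotient is $\alpha_r u_{R_0}$ with $R_0=\{C'\in\mathcal{C}_{l'}:C'\subseteq C_r\}$ and $|R_0|\ge 2$, all singleton worths vanish, yet adjoining a union $C$ disjoint from $C_r$ to $R_0$ yields worth $\alpha_r\neq 0$; so $C$ is not dummifying and the axiom cannot be invoked there. This is precisely the phenomenon you detected in the Dirac setting (``no player has all its unions dummifying''), so your insistence on closing the block case \emph{indirectly}, by feeding the additive-game identities back through additivity, is not a detour but a necessary repair; the paper's own proof, as written, is missing exactly this step.

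The gap in your proposal is that this crucial step is only conjectured: ``I expect \dots this linear system pins down $g$ on every block coalition'' is the crux, and you must prove unique solvability. It does close, as follows. Let $B$ be a block with largest level $m\le k$, $B\ne N$, and parent $P\in\mathcal{C}_{m+1}$. In $e^{\beta}_B$ the only sibling pairs that fail to be indistinguishable are, at levels $l'\le m$, those with one member contained in $B$ and the other disjoint from $B$; at every level above $m$ the quotient vanishes identically, so all siblings there are indistinguishable. Hence symmetry and efficiency confine $g(N,e^{\beta}_B,\mathcal{L})$ to a one-dimensional space: an amount $s_B$ divided equally-with-levels inside $B$, the amount $-s_B$ divided equally over the other members of $\lfloor P\rfloor$ and then equally-with-levels inside each of them, and zero outside $P$. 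Since $LESD^2(N,e^{\beta}_B,\mathcal{L})$ lies in the same space, so does the difference $y_B$, and your chain identities become homogeneous: $\sum_{B'\ni j}y_{B'}=0$, the sum running over blocks $B'\ne N$ containing $j$. Now take the largest block $B_q$ in the chain of $j$ and evaluate this identity at a player $i$ in the parent of $B_q$ but outside $B_q$: every term with $B'\subsetneq B_q$ vanishes at $i$ (as $i$ lies outside the parent of $B'$, which is contained in $B_q$), while the $B_q$-term is a nonzero multiple of $s_{B_q}$; hence $s_{B_q}=0$, and descending the chain with the same evaluation (next choose $i\in B_q\setminus B_{q-1}$, and so on) annihilates every block. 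With this argument inserted, your proof is complete, and in substance it is a correction of the paper's proof rather than a mere variant of it.
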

\begin{proof} It is immediate to check that the LESD$^2$-value satisfies efficiency, additivity, symmetry among unions on each level and dummifying unions for a player property. 
	To prove the unicity, consider a value $g$ for level games that satisfies efficiency, additivity, symmetry among unions on each level and dummifying unions for a player property. Take $(N,v,\mathcal{L})$ with $\mathcal{L}=\{\mathcal{C}_{0},\mathcal{C}_{1},\ldots,\mathcal{C}_{k+1}\}$ and define $v^a$, for all  $S\subseteq N$, by
	\begin{equation*}
		v^a(S)=\sum_{i\in S}v(i).
	\end{equation*}
	Define $\tilde{v}$, for all  $S\subseteq N$, by
	\begin{equation*}
		\tilde{v}(S)=(v-v^a)(S)-\sum_{C\in \mathcal{C}_{k}:C\subseteq S} (v-v^a)(C).
	\end{equation*}
	Now define $v^*_l$, for all $l\in\{1,2,\ldots,k\}$ with $\mathcal{C}_{l}=\{C_{l,1},C_{l,2},\ldots,C_{l,|\mathcal{C}_{l}|}\}$, by
	\begin{equation*}
		v^*_l(S) = \sum_{C\in \mathcal{C}_{l}:C\subseteq S} \left(v(C)-\sum_{C'\in \mathcal{C}_{l-1}:C'\subseteq C} v(C')\right) = \sum_{j=1}^{|\mathcal{C}_{l}|} v^{C_{l,j}}_l(S)
	\end{equation*}
	for all $S\subseteq N$, where $v^{C_{l,j}}_l(S)=v(C_{l,j})-\sum_{C'\in \mathcal{C}_{l-1}:C'\subseteq C_{l,j}} v(C')$ if $C_{l,j}\subseteq S$ and $v^{C_{l,j}}_l(S)=0$ otherwise. 
	For all $l\in\{0,1,\ldots,k\}$,  $C_l(i)\in \mathcal{C}_l$ is a dummifying player in $(\mathcal{C}_{l},(v^{a})^l)$; hence, dummifying unions for a player property implies that, for all $i\in N$,
	\begin{equation}  \label{eq199}
		g_i(N,v^{a},\mathcal{L})=v^a(i)=v(i).
	\end{equation}
	Take now $l\in\{1,2,\ldots,k\}$ and $C_r\in \mathcal{C}_{l}$. Since $g$ satisfies efficiency, then
	\begin{equation*}
		\sum_{i\in N}g_i(N,v^{C_r}_l,\mathcal{L})=v^{C_r}_l(N)=v(C_{r})-\sum_{C'\in \mathcal{C}_{l-1}:C'\subseteq C_{r}} v(C').
	\end{equation*}
	For all $l'\in\{0,1,\ldots,l\}$, all unions $C\in \mathcal{C}_{l'}$ such that $C\nsubseteq C_r$ are dummifying players in $(\mathcal{C}_{l'},(v^{C_r}_l)^{l'})$. By dummifying unions for a player property, $g_i(N,v^{C_r}_l,\mathcal{L})=v^{C_r}_l(i)=0$ for all $i\notin C_r$. And
	since all players in $C_r$ are indistinguishable in $(N,v^{C_r}_l,\mathcal{L})$, then 
	symmetry among unions on each level implies that, for all $i\in C_r$, 
	\begin{equation*}
		g_i(N,v^{C_r}_l,\mathcal{L})=\frac{v^{C_r}_l(C_r)}{\Pi_{l'=1}^{l+1}|\lfloor C_{l'}(i)\rfloor|}.
	\end{equation*}
	Using additivity, for all $i\in N$, 
	\begin{equation}  \label{eq198}
		g_i(N,v^*_l,\mathcal{L})=\frac{v^{C_l(i)}_l(C_l(i))}{\Pi_{l'=1}^{l+1}|\lfloor C_{l'}(i)\rfloor|}.
	\end{equation}
	Define, for all $\alpha \in \mathbb{R}$ and all non-empty $T\subseteq N$, the TU-game $(N,e_{T}^{\alpha}) $ given by $e_{T}^{\alpha }(S)=\alpha $ if $S=T$ and $e_{T}^{\alpha}(S)=0 $ if $S\neq T$.
	Take now into account that $\tilde{v}=\sum_{T\subseteq N}e^{\tilde{v}(T)}_T$. If $T=N$, since $g$ satisfies efficiency and symmetry among unions on each level, it is clear that for all $i\in N$, 
	\begin{equation*}
		g_i(N,e^{\tilde{v}(N)}_N,\mathcal{L})=\frac{{\tilde{v}(N)}}{\Pi_{l=1}^{k+1}|\lfloor C_l(i)\rfloor|}. 
	\end{equation*}
	If $T\subset N$, consider two cases:
	
	\begin{itemize}
		\item Take $T=\cup_{C\in R}C$, with $\emptyset\subset R\subset \mathcal{C}_{k}$. 
		For all $C_j\in \mathcal{C}_{k}$, if $T\neq C_j$ then $e_T^{\tilde{v}(T)}(C_j)=0$ and 
		if $T=C_j$ then $e_T^{\tilde{v}(T)}(C_j)=\tilde{v}(C_j)=0$. Also, if $Q\subseteq \mathcal{C}_{k}$ with $(\mathcal{C}_{k}\setminus R)\cap Q \neq\emptyset$, then $e_T^{\tilde{v}(T)}(\cup_{C\in Q}C)=0$.
		For all $i\in N\setminus T$ and $l\in\{0,1,\ldots,k\}$,  $C_l(i)\in \mathcal{C}_l$ is a dummifying player in $(\mathcal{C}_{l},(e^{\tilde{v}(T)}_T)^l,\mathcal{L}^l)$, then dummifying unions for a player property implies that $g_{i}(N,e^{\tilde{v}(T)}_T,\mathcal{L})=e^{\tilde{v}(T)}_T(i)=0$. Notice that all unions in $R$ are indistinguishable in $(\mathcal{C}_{k},(e^{\tilde{v}(T)}_T)^k,\mathcal{L}^k)$, then by symmetry among unions on each level, 
		$\sum_{i\in C}g_i(N,e^{\tilde{v}(T)}_T,\mathcal{L})=\sum_{i\in C'}g_i(N,e^{\tilde{v}(T)}_T,\mathcal{L})$ 
		for all $C,C'\in R$; and notice also that, since
		\begin{equation*}
			\sum_{i\in T} g_i(N,e^{\tilde{v}(T)}_T,\mathcal{L}) = 
			\sum_{i\in N} g_i(N,e^{\tilde{v}(T)}_T,\mathcal{L}) = e^{\tilde{v}(T)}_T(N) = 0,
		\end{equation*}
		then $\sum_{i\in C}g_i(N,e^{\tilde{v}(T)}_T,\mathcal{L})=0$ for all $C\in R$. Hence, symmetry among unions on each level implies 
		that $g_i(N,e^{\tilde{v}(T)}_T,\mathcal{L})=0$ for all $i\in T$.
		
		\item For any other $T\subset N$ that is not in the previous case,	notice that $e_T^{\tilde{v}(T)}(\cup_{C\in Q}C)=0$ for all $Q\subseteq \mathcal{C}_{k}$. If $i\in N\setminus T$, union $C_l(i)\in \mathcal{C}_l$ is a dummifying player in $(\mathcal{C}_{l},(e^{\tilde{v}(T)}_T)^l,\mathcal{L}^l)$ for all $l\in\{0,1,\ldots,k\}$, then by dummifying unions for a player property, $g_{i}(N,e^{\tilde{v}(T)}_T,\mathcal{L})=e^{\tilde{v}(T)}_T(i)=0$.
		Analogously as in the previous case, symmetry among unions on each level implies that 
		$g_i(N,e^{\tilde{v}(T)}_T,\mathcal{L})=0$ for all $i\in T$. 
	\end{itemize}
	Now additivity implies that, for all $i\in N$, 
	\begin{equation}  \label{eq197}
		g_i(N,\tilde{v},\mathcal{L})=\sum_{T\subseteq N}g_i(N,e_T^{\tilde{v}(T)},\mathcal{L})
		=\frac{{\tilde{v}(N)}}{\Pi_{l=1}^{k+1}|\lfloor C_l(i)\rfloor|}.
	\end{equation}
	Finally, from (\ref{eq199}), (\ref{eq198}), (\ref{eq197}), additivity and 
	$v = v^{a} + \tilde{v} + \sum_{l=1}^{k}v^{*}_l$ it is clear that 
	\begin{equation*}
		g(N,v,\mathcal{L})=LESD^2(N,v,\mathcal{L}).
	\end{equation*}
\end{proof}\bigskip

Finally, we present a set of properties that characterize the last proposal. Efficiency and additivity are common to the previous characterizations. The two new properties are: weak symmetry among unions on each level and dummifying player property. The dummifying player property states that a dummifying player receives his/her individual worth. Similar to the nullifying player of the first proposal, this property only takes into account the characteristic function of the game, and does not depend on the level structure. The weak symmetry among unions property on each level is a weaker version of the symmetry among unions property on each level (used to characterize the first proposal) because only indistinguishable coalitions formed by players with individual worth equal to zero receive the same. It is clear that if a value satisfies the property of symmetry among unions on each level, it also satisfies the weak property of symmetry among unions on each level. Then, there is not exist a value that satisfies efficiency, additivity, symmetry among unions on each level and dummifying player property. 

\bigskip \noindent \textbf{Dummifying player property.} A value $g$ for level games satisfies the dummifying player property if, for all $(N,v,\mathcal{L})$ and all $i\in N$ dummifying player in $(N,v,\mathcal{L})$, it holds that $$g_{i}( N,v,\mathcal{L}) =v(i).$$

\bigskip\noindent \textbf{Weak symmetry among unions on each level.} A value $g$ for level games satisfies weak symmetry among unions on each level  if, for all $(N,v,\mathcal{L})$, with $\mathcal{L}=\{\mathcal{C}_0,\mathcal{C}_1,\ldots,\mathcal{C}_{k+1}\}$ and $v(i)=0$ for all $i\in N$, $l\in\{0,1,\ldots,k\}$ and $C,C'\in \mathcal{C}_l$, with $C\cup C' \subseteq C''\in \mathcal{C}_{l+1}$, indistinguishable in $(\mathcal{C}_l,v^l,\mathcal{L}^l)$, it holds that
$$\sum_{i\in C}g_i(N,v,\mathcal{L})=\sum_{j\in C'}g_j(N,v,\mathcal{L}).$$

\begin{theo}
	The LESD$^3$-value is the unique value for level games that satisfies efficiency, additivity, weak symmetry among unions on each level and dummifying player property.
\end{theo}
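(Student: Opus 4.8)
The plan is to follow the same template as the previous three theorems. First I would verify that the LESD$^3$-value satisfies the four axioms: efficiency follows from the identity $\sum_{i\in N}\big(\Pi_{l=1}^{k+1}|\lfloor C_l(i)\rfloor|\big)^{-1}=1$ (so the surplus $v(C_{k+1}(i))-\sum_{j\in N}v(j)$ is shared out completely through the nested equal divisions); additivity is immediate since both terms of $LESD^3$ are linear in $v$; the dummifying player property holds because a dummifying player $i$ forces $v(N)=\sum_{j\in N}v(j)$ and hence the surplus term vanishes, leaving $v(i)$; and weak symmetry holds because, once all $v(j)=0$, the payoff reduces to the purely combinatorial equal division, whose coalitional sum $\sum_{i\in C}(\ldots)$ depends only on the ancestors of $C$ above level $l$ and therefore agrees for any two unions $C,C'$ sharing the same union at level $l+1$.

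For uniqueness, let $g$ satisfy the four axioms and write $v=v^{a}+w$, where $v^{a}(S)=\sum_{i\in S}v(i)$ and $w=v-v^{a}$. Every player is a dummifying player in $(N,v^{a},\mathcal{L})$, so the dummifying player property gives $g_i(N,v^{a},\mathcal{L})=v^{a}(i)=v(i)$. Since $w(i)=0$ for every singleton, the game $w$ has all individual worths zero, so weak symmetry is available for $w$ and, by additivity, for each term of the expansion $w=\sum_{\emptyset\neq T\subseteq N}e_T^{w(T)}$, in which the singleton terms vanish. For $T=N$, efficiency together with weak symmetry (all unions at each level being indistinguishable in the relevant quotient) yields $g_i(N,e_N^{w(N)},\mathcal{L})=w(N)/\Pi_{l=1}^{k+1}|\lfloor C_l(i)\rfloor|$, exactly as in Theorem~\ref{thed}.

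The core of the argument is to show $g_i(N,e_T^{\alpha},\mathcal{L})=0$ for every proper $T\subsetneq N$ with $|T|\geq 2$. I would first note that each $i\in N\setminus T$ is a dummifying player in $(N,e_T^{\alpha})$, so the dummifying player property forces $g_i=0$ off $T$; equivalently $\sum_{i\in C}g_i=\sum_{i\in C\cap T}g_i$ for every union $C$. Then I would prove, by downward induction on $l$ from $l=k+1$ to $l=0$, that $\sum_{i\in C}g_i(N,e_T^{\alpha},\mathcal{L})=0$ for all $C\in\mathcal{C}_l$; the base case $l=k+1$ is just $\sum_{i\in N}g_i=e_T^{\alpha}(N)=0$ by efficiency. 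The inductive step rests on a dichotomy applied inside a fixed $C''\in\mathcal{C}_{l+1}$ (which already satisfies $\sum_{i\in C''}g_i=0$): if $T$ is a union of level-$l$ unions, then each subordinate of $C''$ is either contained in $T$ or disjoint from it; the disjoint ones contribute $0$ by the previous paragraph, the contained ones are pairwise indistinguishable in $(\mathcal{C}_l,(e_T^{\alpha})^l,\mathcal{L}^l)$, and weak symmetry plus the induction hypothesis force each of their sums to vanish. If instead $T$ is not a union of level-$l$ unions, then $(e_T^{\alpha})^l$ cannot attain $\alpha$ on any combination of level-$l$ unions, so all of them are mutually indistinguishable, and weak symmetry with $\sum_{i\in C''}g_i=0$ again forces every subordinate sum to be $0$. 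Descending to $l=0$, where the unions are singletons, gives $g_i=0$ for every $i$.

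The main obstacle is exactly this inductive step, and specifically the unions that straddle $T$ (meeting it but not contained in it), since such a union is neither killed by the dummifying argument nor obviously comparable to the contained ones. The clean resolution is the observation that the mere existence of a straddling union at level $l$ already means $T$ is not a union of level-$l$ unions, which throws us into the second branch of the dichotomy, where all level-$l$ unions become indistinguishable at once; thus straddling unions never have to be compared individually against contained ones. Once $g_i(N,e_T^{\alpha},\mathcal{L})=0$ is established for all proper $T$, additivity gives $g_i(N,w,\mathcal{L})=w(N)/\Pi_{l=1}^{k+1}|\lfloor C_l(i)\rfloor|=\big(v(C_{k+1}(i))-\sum_{j\in N}v(j)\big)/\Pi_{l=1}^{k+1}|\lfloor C_l(i)\rfloor|$, and combining this with $g_i(N,v^{a},\mathcal{L})=v(i)$ through additivity yields $g=LESD^3$.
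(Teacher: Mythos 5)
Your proof is correct and follows essentially the same route as the paper: the decomposition $v=v^{a}+w$, the dummifying player property applied to $v^{a}$, and then the LED-type argument (efficiency, additivity, weak symmetry, dummifying player property) applied to the zero-individual-worth game $w$ via the expansion into games $e_T^{w(T)}$. Your downward induction on levels, with the dichotomy on whether $T$ is a union of level-$l$ unions, is a careful spelling-out of the step the paper compresses into ``analogous arguments as those used in the proof of Theorem \ref{thed}''; it is a valid and welcome filling-in of that gap, not a different method.
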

\begin{proof} It is immediate to check that the LESD$^3$-value satisfies efficiency, additivity, weak symmetry among unions on each level and dummifying player property. To prove the unicity, consider a value $g$ for level games satisfying efficiency, additivity, weak symmetry among unions on each level and dummifying player property. Take $(N,v,\mathcal{L})$ and define $v^a$, for all  $S\subseteq N$, by
	\begin{equation*}
	v^a(S)=\sum_{i\in S}v(i).
	\end{equation*}
	Define $v^0=v-v^a$.	Additivity implies that 
	\begin{equation}  \label{eq899}
		g(N,v,\mathcal{L})=g(N,v^{a},\mathcal{L})+g(N,v^0,\mathcal{L}).
	\end{equation}
	Since $i$ is a dummifying player in $(N,v^a)$, for all $i\in N$, then dummifying player property implies that 
	\begin{equation}  \label{eq799}
		g_i(N,v^{a},\mathcal{L})=v^a(i)=v(i).
	\end{equation}
	Now, using for $(N,v^0, \mathcal{L})$ analogous arguments as those used in the proof of Theorem \ref{thed}, it is easy to see that efficiency, additivity, weak symmetry among unions on each level and dummifying player property imply that
	\begin{equation}  \label{eq898}
		g(N,v^0,\mathcal{L})=LED(N,v^0,\mathcal{L}).
	\end{equation}
	Finally, from (\ref{eq899}), (\ref{eq799}) and (\ref{eq898}) it is clear that 
	\begin{equation*}
		g(N,v,\mathcal{L})=LESD^3(N,v,\mathcal{L}).
\end{equation*}\end{proof}

\section{Concluding remarks}\label{Sec:conre}
In this article we have extended the equal division value and the equal surplus division value to level games. In the latter case, we have relied on three extensions of such a value for cooperative games with a priori unions described in Alonso-Meijide et al. (2020). In Gon\c{c}alves-Dosantos and Alonso-Meijide (2021)  two new variants of the equal surplus value for cooperative games with a priori unions are proposed. A topic for future work is the extension of these two variants to games with levels.

One of the advantages of the values based on the equal sharing criterion is that they are generally calculable in moderate times even for games with many players, because they do not make use of the full characteristic function of the game. In any case, in order for managers to be able to make use of the values considered here, it would be convenient to make available to the community a computer tool developed in free software to calculate them. In the near future it would be interesting to generate such a tool.

\section*{References}

\noindent Alonso-Meijide JM, Costa J, Garc\'{\i}a-Jurado I, Gon\c{c}alves-Dosantos JC (2020). On egalitarian values for cooperative games with a priori unions. TOP 28, 672-688.\newline
\noindent Casajus A, H\"{u}ttner F (2014). Null, nullifying, or dummifying
players: The difference between the Shapley value, the equal division value,
and the equal surplus division value. Economics Letters 122, 167-169.\newline
\noindent Driessen TSH, Funaki Y (1991). Coincidence of and collinearity 
between game theoretic solutions. OR Spectrum 13, 15-30.\newline
\noindent
Gon\c{c}alves-Dosantos JC, Alonso-Meijide JM (2021). New results on egalitarian values for games with a priori unions. To appear in Optimization, DOI: 10.1080/02331934.2021.1995731\newline
\noindent Hu XF, Li DF (2021). The equal surplus division value for cooperative games with a level structure. Group Decision and Negotiation 30, 1315-1341.\newline
\noindent Owen G (1977) Values of games with a priori unions. In: Mathematical
Economics and Game Theory (R Henn, O Moeschlin, eds.), Springer, 76-88.\newline
\noindent Selten R (1972). Equal share analysis of characteristic function
experiments. In: Contributions to Experimental Economics III. (Sauermann H,
ed.), Mohr Siebeck, 130-165.\newline
\noindent van den Brink R (2007). Null or nullifying players: the difference
between the Shapley value and equal division solutions. Journal of Economic
Theory 136, 767-775.\newline
\noindent Winter E (1989). A value for cooperative games with levels structure of cooperation. International Journal of Game Theory 18, 227-240.\newline

\end{document}